\providecommand{\U}[1]{\protect\rule{.1in}{.1in}}
\renewcommand{\thesection}{\Roman{section}}
\theoremstyle{definition}
\newtheorem{theorem}{Theorem}
\newtheorem{corollary}{Corollary}
\newtheorem{definition}{Definition}
\newtheorem{lemma}{Lemma}
\newtheorem{notation}{Notation}
\newtheorem{remark}{Remark}
\begin{document}

\title{Intrinsic Construction of Lyapunov Functions on Riemannian Manifold}
\author{Dongjun Wu\footnote{The author is with Center for Control Theory and Guidance Technology,
Harbin Institute of Technology, Harbin, 150001, China and Laboratoire des
Signaux et Syst\`emes, Universit\'e Paris Saclay, 91192 Gif-sur-Yvette, France;
\textit{Email:dongjun.wu@l2s.centralesupelec.fr}}}

\date{}
\maketitle

\begin{abstract}
For systems evolving on a Riemannian manifold, we propose converse Lyapunov
theorems for asymptotic and exponential stability. The novelty of the proposed
approach is that is does not rely on local Euclidean coordinate, and is thus
valid on a wider domain than the immediate vicinity of the considered
equilibrium. We also show how the constructions can be useful for robustness analysis.

\end{abstract}

\section{Introduction}

Since the foundation of modern control theory, Lyapunov's second method has
been among the most important tools for the investigations of stability of
control systems. The success of this method owes partially to the fact that it
is applicable for a wide class of dynamical systems, including continuous and
discrete time systems, stochastic systems, systems evolving on manifolds,
distributed parameter systems, delay systems, etc. For each scenario, the key
procedure of the Lyapunov's second method is the seeking of a Lyapunov
function. Before searching for such a function, we may ask whether it exists
and, if yes, how to construct it. These joint are known as the \emph{converse
Lyapunov problem}. A thorough review of the problem regarding continuous time
systems evolving on Euclidean space can be found in
\cite{yoshizawa1966stability}, \cite{teel2000smooth}. For more recent
developments, we refer the reader to a survey paper
\cite{kellett2015classical} and the references therein. The importance of the
converse problem lies not only in its theoretic interests but also its help to
the analysis of some other control problems. Typical applications of converse
theorem can be found in robust stability \cite{lin1996smooth} and
input-to-state stability analysis \cite{khalil2002nonlinear}.

In this paper, we study the converse Lyapunov problem for systems evolving on
Riemannian manifolds, for which not much attention has been drawn upon. It is
true that most control problems are studied in Euclidean spaces; the reason
seems to be that in most cases, stability is discussed in local sense, so the
manifold can be seen as Euclidean in a local coordinate. On the one hand, this
does simplify the problem, on the other hand, this may at the same time
preclude the utilization of the underlying rich structure of the manifold, for
example, to design coordinate-free control laws. In contrast to this
situation, a whole field of research is devoted to geometric tool for system
analysis and design, see for example, \cite{agrachev2013control},
\cite{jurdjevic1997geometric}, \cite{bullo2019geometric}. By ``geometric''\ we
mean coordinate-free or intrinsic objects. However, in terms of converse
Lyapunov problems, theories in parallel with those in the classical Euclidean
space still need to be established.

We list some previous works related to the problem that is going to be treated
in this paper, from which we will see that several issues need to be settled.
In 2010, F. Pait et al. \cite{pait2010some} suggested to use the quadratic
distance defined therein as a Lyapunov function. However, no explicit form of
such a function was not derived. In \cite{nakamura2013asymptotic}, the authors
constructed semiconcave control Lyapunov functions for general manifolds, by
relying on some constructions in local coordinates instead of the intrinsic
metric on the manifold. Besides, the constructed Lyapunov function is not
smooth. Later, F. Taringoo et al. systematically studied the problem on
Riemannian manifold \cite{taringoo2013local}. For example, locally
asymptotically stable and exponentially stable systems on Riemannian manifolds
were proved to admit certain Lyapunov functions. The theorems are proved by
``pulling back'' the Lyapunov function constructed in Euclidean space to the
manifold. The problem of this procedure is that it relies on the local
coordinate of the manifold hence the result is local, and further analysis on
whether the construction can be extended to the region of attraction are needed.

This paper aims at solving the issues mentioned in the previous paragraph in
this paper. Coordinate-free constructions will be given, and in the proof we
do not resort to the existing theories in Euclidean space. The Lyapunov
functions to be constructed will be seen to hold exactly the same meaning as
that in Euclidean space. The two crucial tools in the proof of our result is
the modified version of Lipschitz continuity and the first variation formula
of arc length.

The paper is organized as follows. In Section \ref{sec: pre}, we review some
basic definitions from Riemannian geometry and stability notions on Riemannian
manifolds. In Section \ref{sec: conv thm}, the main theorems are proved,
namely the construction of Lyapunov function for GAS and GES systems in a
coordinate-free manner. Section \ref{sec: dis and ext} includes some further
discussions and extensions of the main theorem, particularly, an application
to input-to-state stability is given.

\begin{notation}
Throughout this paper, we adopt the following notations. $\mathcal{X}$: the
riemannian manifold; $\Gamma(\mathcal{X})$: the set of smooth vector fields on
$\mathcal{X}$; $T_{x}\mathcal{X}$: the tangent space at $x$; $\left\langle
v_{x},u_{x}\right\rangle $: the Riemannian product of $v_{x},u_{x}\in
T_{x}\mathcal{X}$; $\nabla$: Levi-Civita connection; $\text{D}/\text{d}t$: the
Covariant derivative; $\ell(c)$: the length of the curve $c$; $d(x,y)$: the
Riemannian distance between $x$ and $y$; $L_{f}V(x)$: the Lie derivative of
$V(x)$ along the flow generated by $f(x)$; $\mathcal{L}_{f} V(t,x)$: the timed
Lie derivative of $V(t,x)$ along the flow of $f(t,x)$; $P_{p}^{q}$: the
parallel transport from $p$ to $q$; $\phi_{\ast}$: the push forward of a
diffeomorphism $\phi:\mathcal{M}\rightarrow\mathcal{N}$; $\phi^{\ast}$: the
pull back of a smooth map $\phi:\mathcal{M}\rightarrow\mathcal{N}$;
$|\cdot|_{\infty}$: the infinity norm; $\mathbb{R}_{+}$: the set of non
negative numbers; $B_{x}^{c}$: the open ball with radius $c$ centered at $x$;
$\phi(t;t_{0};x_{0})$: the flow of a system with initial condition $(t_{0},
x_{0})$.
\end{notation}

\section{Preliminaries\label{sec: pre}}

\subsection{Riemannian manifolds}

In this section, we summarize a few important tools that we will use
repeatedly from Riemannian geometry and the stability notions on Riemannian
manifold. A standard treatment of Riemannian geometry can be found in
\cite{carmo1992riemannian}.

On a Riemannian manifold, connections can be defined, among which there exists
an important one called the Levi-Civita connection.

\begin{theorem}
[Levi-Civita]Given a Riemannian manifold $\mathcal{X}$, there exists a unique
affine connection on $\mathcal{X}$, such that

\begin{enumerate}
\item[a.] $\nabla$ is symmetric, namely, $\nabla_{X}Y-\nabla_{Y}X=[X,Y]$ for
all $X,Y\in\Gamma(\mathcal{X})$.

\item[b.] $\nabla$ is compatible with the Riemannian metric, namely,
\begin{equation}
X\left\langle Y,Z\right\rangle =\left\langle \nabla_{X}Y,Z\right\rangle
+\left\langle Y,\nabla_{X}Z\right\rangle ,\ X,Y,Z\in\Gamma(\mathcal{X})
\label{comp Levi}%
\end{equation}

\end{enumerate}
\end{theorem}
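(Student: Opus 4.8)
The plan is to prove uniqueness and existence at once by first extracting from the two defining properties an explicit expression for $\nabla$ — the Koszul formula — and then verifying that this expression does define an affine connection with properties a.\ and b.

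For uniqueness, I would suppose $\nabla$ satisfies a.\ and b., write property b.\ for the three cyclic permutations of a triple $(X,Y,Z)$, namely for $X\langle Y,Z\rangle$, $Y\langle Z,X\rangle$ and $Z\langle X,Y\rangle$, then add the first two and subtract the third. Grouping the six resulting terms according to which of $X,Y,Z$ sits in the second slot of the inner product and using the symmetry relation $\nabla_A B-\nabla_B A=[A,B]$ from a.\ to collapse the mixed pairs, one is led to
\begin{equation*}
2\langle\nabla_X Y,Z\rangle = X\langle Y,Z\rangle + Y\langle Z,X\rangle - Z\langle X,Y\rangle + \langle[X,Y],Z\rangle - \langle[X,Z],Y\rangle - \langle[Y,Z],X\rangle .
\end{equation*}
Because the Riemannian metric is nondegenerate, the left-hand side, known against every $Z$, determines $\nabla_X Y$; hence at most one such connection exists.

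For existence, I would read the display the other way and \emph{define} $\nabla_X Y$ to be the unique vector field satisfying it for all $Z\in\Gamma(\mathcal{X})$. This is legitimate once one checks that the right-hand side, as a function of $Z$, is $\mathbb{R}$-linear and, crucially, $C^\infty(\mathcal{X})$-linear: replacing $Z$ by $fZ$ produces extra terms from $Z\langle X,Y\rangle$ and from the brackets $[X,Z]$, $[Y,Z]$ that must cancel, leaving only an overall factor $f$; then the right-hand side is a smooth $1$-form and the musical isomorphism $T\mathcal{X}\to T^{*}\mathcal{X}$ yields $\nabla_X Y$. It remains to verify, term by term in the defining formula, that $\nabla$ is additive in both arguments, tensorial in the first ($\nabla_{fX}Y=f\nabla_X Y$), satisfies the Leibniz rule $\nabla_X(fY)=(Xf)Y+f\nabla_X Y$, and fulfils a.\ and b.; each uses only the defining identity together with standard facts such as $[X,fY]=f[X,Y]+(Xf)Y$ and the symmetry of $\langle\cdot,\cdot\rangle$.

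I expect the real work — routine but genuinely sign-sensitive — to be these verifications, in particular the $C^\infty(\mathcal{X})$-linearity in $Z$ and the Leibniz rule in $Y$, where one must track cancellations among the directional-derivative terms and the Lie-bracket terms with care. By contrast, properties a.\ and b.\ fall out almost immediately from the (anti)symmetries built into the Koszul formula. No deeper idea is needed beyond the Koszul identity itself; everything after it is bookkeeping.
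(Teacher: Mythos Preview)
Your proposal is correct and is exactly the standard Koszul-formula argument; there is nothing to fix in the outline, and the points you flag as ``sign-sensitive'' are indeed where the care goes.

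However, note that the paper does not actually prove this theorem. It is stated in the Preliminaries section as background from Riemannian geometry, with a pointer to a standard reference (do Carmo, \emph{Riemannian Geometry}) for a full treatment. So there is no ``paper's own proof'' to compare against; your Koszul argument is precisely the proof one finds in that reference.
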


From (\ref{comp Levi}) we get the following important equality:$\frac
{\text{d}}{\text{d}t}\left\langle V,W\right\rangle =\left\langle
\frac{\text{D}}{\text{d}t}V,W\right\rangle +\left\langle V,\frac{\text{D}%
}{\text{d}t}W\right\rangle $ where $V,W$ are vector fields along a given curve
parametrized by $t$. Throughout this paper, we use $\nabla$ to denote the
underlying Levi-Civita connection for the Riemannian manifold $\mathcal{X}$.

The Levi-Civita connection can be used to describe the geodesic curve, which
will serve as the crucial object in our paper. Loosely speaking, given two
points $x,y\in\mathcal{X}$, if there exists a $\mathcal{C}^{2}$ curve $\gamma$
joining $x$ to $y$ such that the length of $\gamma$ is minimized under small
variations, then $\gamma$ should verify $\nabla_{\gamma^{\prime}(s)}%
\gamma^{\prime}(s)=0$ for all $s\in I$. In this case $\gamma$ is called a
geodesic. Given $(x,v)\in T_{x}\mathcal{X}$, there exists (locally) a unique
geodesic $\gamma:(-\delta,\delta)$ such that $\gamma^{\prime}(0)=v$, where
$\delta$ may depend on $v$.

Once the Riemannian metric is given, the Riemannian distance $d(x,y)$ between
two points $x$ and $y$ can be defined which makes $\mathcal{X}$ into a metric
space. According to Hopf-Rinow theorem, this metric space is complete for
complete Riemannian manifold. Throughout this paper, we always assume the
Riemannian manifold to be complete.

We also mention an important theorem which will be used repeatedly throughout
this paper.

\begin{theorem}
[First variation of arc length]Let $\gamma:\left[  a,b\right]  \rightarrow
\mathcal{X}$ be a differentiable curve and $F$ a variation of $\gamma$. Denote
$\ell(c)$ the length of curve $c$. Then%
\begin{equation}
\left.  \frac{d}{dt}\right\vert _{t=0}\ell(F(t,\cdot))=\frac{1}{|\partial
F/\partial s|}\left[  \left\langle \frac{\partial F}{\partial s}%
,\frac{\partial F}{\partial t}\right\rangle _{s=a}^{s=b}-\int_{a}%
^{b}\left\langle \frac{\partial F}{\partial t},\nabla_{\partial/\partial
s}\frac{\partial F}{\partial s}\right\rangle \right]  _{t=0}, \label{1st Var}%
\end{equation}
which is called the first variation formula of arc length. When $\gamma$ is a
geodesic curve parametrized proportional to arc length, then%
\[
\left.  \frac{d}{dt}\right\vert _{t=0}\ell(F(t,\cdot))=\left\langle
\frac{\partial F}{\partial s},\frac{\partial F}{\partial t}\right\rangle
_{s=a}^{s=b}.
\]

\end{theorem}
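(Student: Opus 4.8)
The plan is to express the length as an integral, differentiate under the integral sign, and then turn the integrand into a total $s$-derivative plus a geodesic-acceleration term by using the metric-compatibility identity \eqref{comp Levi} twice and the torsion-freeness of $\nabla$ once. Write the variation as a smooth map $F:(-\varepsilon,\varepsilon)\times[a,b]\to\mathcal X$ with $F(0,\cdot)=\gamma$, and abbreviate the coordinate fields along $F$ by $S:=\partial F/\partial s$ and $T:=\partial F/\partial t$. Since $\gamma$ is regular, $S$ is nonvanishing for $|t|$ small, so $\langle S,S\rangle^{1/2}$ is smooth there and, using the curve form of \eqref{comp Levi} recorded just after the Levi--Civita theorem,
\[
\frac{d}{dt}\,\ell(F(t,\cdot))=\int_a^b\frac{\partial}{\partial t}\langle S,S\rangle^{1/2}\,ds=\int_a^b\frac{1}{|S|}\Big\langle\tfrac{\mathrm{D}}{\partial t}S,\,S\Big\rangle\,ds .
\]

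The crux is the \emph{symmetry lemma} $\tfrac{\mathrm{D}}{\partial t}S=\tfrac{\mathrm{D}}{\partial s}T$, which is the single place torsion-freeness is invoked: in a chart the $k$-th component of $\tfrac{\mathrm{D}}{\partial t}S-\tfrac{\mathrm{D}}{\partial s}T$ is $\big(\Gamma^{k}_{ij}-\Gamma^{k}_{ji}\big)\,\partial_tF^{i}\,\partial_sF^{j}$ (the pure second-derivative terms cancel since $\partial_t\partial_sF=\partial_s\partial_tF$), and this vanishes because the Levi--Civita Christoffel symbols satisfy $\Gamma^{k}_{ij}=\Gamma^{k}_{ji}$; equivalently, the coordinate fields $\partial/\partial t$ and $\partial/\partial s$ commute and $\nabla$ is symmetric. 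Substituting this and applying \eqref{comp Levi} a second time — now differentiating $\langle T,S\rangle$ in the $s$-direction — gives
\[
\frac{1}{|S|}\Big\langle\tfrac{\mathrm{D}}{\partial s}T,\,S\Big\rangle=\frac{1}{|S|}\Big(\frac{\partial}{\partial s}\langle T,S\rangle-\Big\langle T,\,\tfrac{\mathrm{D}}{\partial s}S\Big\rangle\Big).
\]

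Now evaluate at $t=0$. By hypothesis $\gamma$ is parametrized proportionally to arc length, so $|S(0,\cdot)|$ is constant in $s$ and the scalar $1/|S|$ pulls out of the $s$-integral; the term $\partial_s\langle T,S\rangle$ integrates, by the fundamental theorem of calculus, to the boundary contribution $\big[\langle\partial F/\partial s,\partial F/\partial t\rangle\big]_{s=a}^{s=b}$, while $\tfrac{\mathrm{D}}{\partial s}S=\nabla_{\partial/\partial s}\partial F/\partial s$; assembling these terms produces exactly \eqref{1st Var}. For the second assertion, if in addition $\gamma$ is a geodesic then $\nabla_{\gamma'}\gamma'=0$ along $\gamma$, so the integral term drops out, and in the unit-speed parametrization ($|S|=1$) one is left with $\big[\langle\partial F/\partial s,\partial F/\partial t\rangle\big]_{s=a}^{s=b}$.

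I expect the work to be bookkeeping rather than conceptual: one must first set up the two covariant-derivative operators $\tfrac{\mathrm{D}}{\partial t}$, $\tfrac{\mathrm{D}}{\partial s}$ for vector fields along a parametrized surface together with their Leibniz rule against $\langle\cdot,\cdot\rangle$ and prove the symmetry lemma cleanly; one must justify differentiation under the integral sign, which needs $F$ (hence $\gamma$) to be at least $C^{2}$ and $S$ nowhere zero (both automatic in the geodesic case); and one must note carefully that it is precisely the ``proportional to arc length'' hypothesis that makes $1/|\partial F/\partial s|$ constant in $s$ and hence extractable from the integral at $t=0$. Once these are in place the computation is just the three-line chain: compatibility in $t$, symmetry, compatibility in $s$.
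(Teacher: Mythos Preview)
The paper does not prove this theorem: it is quoted in the Preliminaries as a standard fact from Riemannian geometry (with a pointer to \cite{carmo1992riemannian}) and then used as a tool in later proofs. So there is no ``paper's own proof'' to compare against.

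Your argument is the standard textbook derivation and is correct. One small point worth tightening: the theorem as stated in the paper places the factor $1/|\partial F/\partial s|$ \emph{outside} the bracket already in the general formula \eqref{1st Var}, which, as you yourself observe, is only legitimate when $|\partial F/\partial s|$ is constant in $s$ at $t=0$, i.e.\ when $\gamma$ has constant speed. Your write-up silently imports that hypothesis (``By hypothesis $\gamma$ is parametrized proportionally to arc length'') when passing from the integrand to \eqref{1st Var}, even though the paper's first clause only says ``differentiable curve''. This is a wrinkle in the \emph{statement}, not in your proof; just flag explicitly that without constant speed the correct general formula keeps $1/|S|$ inside the integral, and that \eqref{1st Var} as written already presumes constant-speed parametrization.
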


\subsection{Stability notions on Riemannian manifolds}

Consider the following dynamical system evolving on $\mathcal{X}$:%
\begin{equation}
\dot{x}=f(t,x) \label{Sys}%
\end{equation}
where $f(t,x)$ is a time varying $\mathcal{C}^{1}$ vector field on
$\mathcal{X}$. An equilibrium point $x_{0}$ is such that $\phi(t;t_{0}%
,x_{0})=x_{0}$ for all $t\geq t_{0}$, where $\phi$ is the flow of $f$. Various
stability concepts can be given similarly as in Euclidean space by replacing
the norm by the Riemannian distance $d$ on $\mathcal{X}$.

\begin{definition}
\label{def: stability}An equilibrium $x_{\ast}$ for $\mathcal{X}$ is

\begin{enumerate}
\item (locally) uniformly\ stable (US) if there exists a class $\mathcal{K}$
function $\alpha$ and a positive constant $c$, independent of $t_{0}$, such
that%
\[
d(\phi(t;t_{0},x_{0}),x_{\ast})\leq\alpha(d(x_{0},x_{\ast})),\ \forall t\geq
t_{0}\geq0,\ \forall x_{0}\in B_{x_{\ast}}^{c};
\]

\item (locally)\ uniformly asymptotically stable (UAS) if there exists a class
$\mathcal{KL}$ function $\beta$ and a positive constant $c$, independent of
$t_{0}$, such that%
\begin{equation}
d(\phi(t;t_{0},x_{0}),x_{\ast})\leq\beta(d(x_{0},x_{\ast}),t-t_{0}),\ \forall
t\geq t_{0}\geq0,\ \forall x_{0}\in B_{x_{\ast}}^{c}. \label{eq: def: LAS}%
\end{equation}

\item (locally) exponentially stable (LES) if there exists three positive
constants $K$, $\lambda$ and $c$ such that%
\begin{equation}
d(\phi(t;t_{0},x_{0}),x_{\ast})\leq Ke^{-\lambda(t-t_{0})}d(x_{0},x_{\ast
}),\ \forall t\geq t_{0}\geq0,\ \forall x_{0}\in B_{x_{\ast}}^{c}.
\label{eq: def: LES}%
\end{equation}

\item uniformly globally asymptotically stable (UGAS) if (\ref{eq: def: LES}%
)is satisfied for any $x_{0}$; uniformly exponentially asymptotically (UGES)
if (\ref{eq: def: LES}) is satisfied for all $x_{0}$.
\end{enumerate}
\end{definition}

\begin{remark}
In \cite{bullo2019geometric}, \cite{taringoo2013local}, stability definitions
are given by the $\varepsilon$-$\delta$ language. However, it's not hard to
show that the two ways are equivalent, see for example
\cite{taringoo2013local}. Similar to the Euclidean case, the comparison
function will simplify the stability analysis.
\end{remark}

A Lyapunov candidate $V:\mathbb{R}_{+}\times U\rightarrow\mathbb{R}_{+}$ is a
locally positive definite function about the equilibrium $x_{\ast}$, where $U$
is an open neighborhood of $x_{\ast}$, namely, $V(t,x)\geq0$ for all
$t\in\mathbb{R}_{+}$ and $x\in U$ and $V(t,x)=0$ if and only if $x=x_{\ast}$.

On manifold, the partial derivative of a function is normally not a
coordinate-free notion. And in control systems, we will deal with time varying
vector fields, so we need the concept of \emph{timed Lie derivative}.

\begin{definition}
\label{def: tLie} [Timed Lie derivative]Consider the time invariant system%
\begin{align}
\frac{dx(t)}{dt}  &  =f(s(t),x(t))\nonumber\\
\frac{ds(t)}{dt}  &  =1, \label{TV to TIV}%
\end{align}
with initial condition%
\[
x(t_{0})=x_{0},\ s(t_{0})=t_{0}.
\]
The Lie derivative of $V$ with respect to (\ref{Sys}) is defined as the Lie
derivative of $V$ with respect to (\ref{TV to TIV}), and is denoted
$\mathcal{L}_{f}V$ more precisely,%
\begin{equation}
\mathcal{L}_{f}V(t,x)=:L_{\tilde{f}}V(t,x) \label{Lie der}%
\end{equation}
where $f$ is the augmented vector field $(f(s,x),\ 1)$.
\end{definition}

\begin{remark}
$L_{\tilde{f}}V$ is well defined since it's the usual Lie derivative of a
time-invariant function with respect to a time-invariant vector field. Since
the flow of $\tilde{f}$ is $(t,\phi_{f}(t;t_{0},x_{0}))$. In coordinates, at
point $(t_{0},x_{0})$, it reads%
\begin{align*}
L_{\tilde{f}}V(t_{0},x_{0})  &  =\lim_{t\rightarrow t_{0}}\frac{V(t,\phi
_{f}(t;t_{0},x_{0}))-V(t_{0},x_{0})}{t-t_{0}}\\
&  =\frac{\partial V}{\partial t}(t_{0},x_{0})+\frac{\partial V}{\partial
x}(t_{0},x_{0})f(t_{0},x_{0}),
\end{align*}
which coincides with the time derivative of $V$ along (\ref{Sys}).
\end{remark}

We are now in position to give the Lyapunov stability theorem on Riemannian manifolds.

\begin{theorem}
Let $x_{\ast}$ be an equilibrium point of the system (\ref{Sys}) and $D$ be an
open connected neighborhood of $x_{\ast}$. Let $V:\mathbb{R}_{+}\times
D\rightarrow\mathbb{R}_{+}$ be a Lyapunov candidate such that%
\begin{equation}
W_{1}(d(x,x_{\ast}))\leq V(t,x)\leq W_{2}(d(x,x_{\ast})),\ \forall
t\geq0,\ x\in D, \label{bd of LF}%
\end{equation}
then $x_{\ast}$ is uniformly stable if
\[
\mathcal{L}_{f}V(t,x)\leq0;
\]
it is uniformly asymptotically stable if%
\begin{equation}
\mathcal{L}_{f}V(t,x)\leq-W_{3}(d(x,x_{\ast})), \label{bd of dV}%
\end{equation}
where $W_{i}$ are class $\mathcal{K}$ functions. If $W_{i}(r)=c_{i}r^{p}$,
where $c_{i}>0$, for $i=1,2,3$ and $p>0$, then (\ref{bd of LF}) and
\ref{bd of dV}) together imply exponentially stability.
\end{theorem}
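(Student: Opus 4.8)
The plan is to mimic the classical Euclidean Lyapunov argument, but replace the norm by the Riemannian distance $d(\cdot,x_{\ast})$ throughout, and justify each comparison-function step using only the intrinsic data in the hypotheses. First I would reduce the time-varying case to the time-invariant one via the augmentation \eqref{TV to TIV}: along a trajectory $t\mapsto(t,\phi_f(t;t_0,x_0))$ the function $t\mapsto V(t,\phi_f(t;t_0,x_0))$ is differentiable, and by the remark following Definition~\ref{def: tLie} its derivative equals $\mathcal{L}_f V$ evaluated along the flow. Hence the hypothesis $\mathcal{L}_f V\le 0$ gives that $t\mapsto V(t,\phi_f(t;t_0,x_0))$ is nonincreasing, and \eqref{bd of dV} gives the stronger differential inequality $\tfrac{d}{dt}V(t,\phi_f)\le -W_3(d(\phi_f,x_{\ast}))$.

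For uniform stability, given $\varepsilon>0$ (small enough that $\bar B_{x_\ast}^\varepsilon\subset D$), I would pick $\delta>0$ with $W_2(\delta)<W_1(\varepsilon)$; then for $d(x_0,x_\ast)<\delta$ and any $t_0$, monotonicity of $V$ along the trajectory together with the sandwich \eqref{bd of LF} yields $W_1(d(\phi_f(t;t_0,x_0),x_\ast))\le V(t,\phi_f)\le V(t_0,x_0)\le W_2(d(x_0,x_\ast))<W_1(\varepsilon)$, so $d(\phi_f(t;t_0,x_0),x_\ast)<\varepsilon$ for all $t\ge t_0$; this also keeps the trajectory inside $D$ so the argument is self-consistent, and it produces a class-$\mathcal{K}$ gain $\alpha$ in the form required by Definition~\ref{def: stability} (take $\alpha=W_1^{-1}\circ W_2$ after shrinking $c$). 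For uniform asymptotic stability I would run the standard comparison-lemma step: writing $v(t)=V(t,\phi_f(t;t_0,x_0))$ and using \eqref{bd of LF} to bound $W_3(d)\ge W_3(W_2^{-1}(v))=:W(v)$ with $W$ of class $\mathcal K$, one gets $\dot v\le -W(v)$, whose solution is dominated by $\sigma(v(t_0),t-t_0)$ for a suitable class-$\mathcal{KL}$ function $\sigma$ (the solution of $\dot y=-W(y)$); composing with \eqref{bd of LF} once more gives $d(\phi_f(t;t_0,x_0),x_\ast)\le W_1^{-1}(\sigma(W_2(d(x_0,x_\ast)),t-t_0))=:\beta(d(x_0,x_\ast),t-t_0)$, which is the UAS estimate \eqref{eq: def: LAS}.

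For the exponential case with $W_i(r)=c_ir^p$, the inequalities become $c_1 d^p\le v\le c_2 d^p$ and $\dot v\le -c_3 d^p\le -(c_3/c_2)v$, so $v(t)\le v(t_0)e^{-(c_3/c_2)(t-t_0)}$; then $c_1 d(\phi_f,x_\ast)^p\le v(t)\le c_2 d(x_0,x_\ast)^p e^{-(c_3/c_2)(t-t_0)}$, and taking $p$-th roots gives $d(\phi_f(t;t_0,x_0),x_\ast)\le (c_2/c_1)^{1/p}e^{-(c_3/(p c_2))(t-t_0)}d(x_0,x_\ast)$, i.e. LES with $K=(c_2/c_1)^{1/p}$ and $\lambda=c_3/(pc_2)$.

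The only genuinely non-routine point is making sure the whole comparison argument stays intrinsic and that the trajectory does not leave the neighborhood $D$ where the estimates are valid: since $D$ need not be a metric ball, I would first choose $c>0$ small enough that $\bar B_{x_\ast}^{c}\subset D$ and run all the $\varepsilon$--$\delta$ bookkeeping inside such a ball, using completeness of $\mathcal X$ (Hopf--Rinow) so that sublevel sets of $d(\cdot,x_\ast)$ are well-behaved; the differential-inequality/comparison-lemma machinery then transfers verbatim from the Euclidean proof because it only sees the scalar function $v(t)$, and $d$ being merely continuous (not smooth) is harmless since we never differentiate $d$ itself, only $V$ along trajectories. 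I expect the bookkeeping to confine trajectories to $D$ to be the one place requiring care; everything else is the classical argument with $\|\cdot\|$ replaced by $d(\cdot,x_\ast)$.
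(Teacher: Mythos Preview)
Your proposal is correct and matches the paper's approach exactly: the paper does not give a detailed proof but simply remarks that one repeats the Euclidean procedures after observing that $\tfrac{d}{dt}V(t,\phi(t;t_0,x_0))=\mathcal{L}_f V(t,x)$, which is precisely the reduction you perform before running the standard comparison argument on the scalar function $v(t)$. Your additional care about keeping trajectories inside $D$ and the explicit constants in the exponential case go beyond what the paper spells out, but are entirely in the spirit of the indicated proof.
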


This theorem can be proved by repeating the procedures used in Euclidean space
by noticing that%
\[
\frac{\text{d}}{\text{d}t}V(t,\phi(t;t_{0},x_{0}))=\mathcal{L}_{f}V(t,x).
\]

\section{Converse theorem on Riemannian manifold\label{sec: conv thm}}

Though the converse theorem can be done locally, in order to simplify the
analysis and computation, and to streamline our idea, in the sequel, we will
assume global stability. The proof can be easily extended to local case, by
replacing ``globally exponentially stable'' with ``exponentially stable with
region of attraction $U$ where $U$ is an invariant set.'' . We start with
exponential stability.

Recall that, in the proof of converse theorems, there is a key
assumption:\ the global Lipschitz condition. In $\mathbb{R}^{n}$,
$f:\mathbb{R}^{n}\rightarrow\mathbb{R}^{n}$ is said to be Lipschitz continuous
if there exists a constant $L$ such that%
\[
|f(x)-f(y)|\leq L|x-y|
\]
where $\left\vert \cdot\right\vert $ is the Euclidean norm. On Riemannian
manifold, if $f$ is a vector field, $f(x)$ and $f(y)$ will live in different
tangent spaces, so it's not possible to compare them directly. In
\cite{taringoo2013local}, the authors considered the tangent map%
\[
Tf:T\mathcal{X}\rightarrow TT\mathcal{X}.
\]
At every point $x\in\mathcal{X}$, $T_{x}f$ is a linear operator. The authors
assume this operator to be uniformly bounded and claim that if%
\begin{equation}
|T_{x}f(t,x)(X)|_{e}\leq c_{2}|T_{x}f(t,x)(X)|_{g}, \label{0}%
\end{equation}
when%
\begin{equation}
|X|_{e}\leq c_{2}|X|_{g},\ \forall X\in T_{x}\mathcal{X} \label{1}%
\end{equation}
where $\left\vert \cdot\right\vert _{e}$ and $\left\vert \cdot\right\vert
_{g}$ stand for the Euclidean and Riemannian metric respectively. However,
$T_{x}f(t,x)(X)$ lives in $T_{f(x)}T\mathcal{X}$ so its Riemannian norm needs
to be defined. There exist canonical Riemannian metrics on the second order
tangent bundle, such as the Sasaki metric, we remark that however, even if
$\left\vert T_{x}f(t,x)(X)\right\vert _{g}$ is replaced by a Riemannian metric
on $TT\mathcal{X}$, the implication from(\ref{1}) to (\ref{0}) is not clear.

Instead of defining a metric on $TT\mathcal{X}$ and studying the tangent map,
we consider the Riemannian version of Lipschitz continuity. This definition
can be found for example in \cite{canary2006fundamentals} Chapter II.3.
Intuitively, we transport two tangent vectors into a same tangent space so
that we can make the comparison between them.

\begin{definition}
[Parallel transport]Let $c:I\rightarrow\mathcal{X}$ be a differentiable curve
in $\mathcal{X}$ and $V_{0}$ a vector tangent to $\mathcal{X}$ at $c(t_{0})$.
Then there exists a unique parallel vector field $V$ along $c$, i.e.
$\nabla_{\gamma^{\prime}(t)}V(t)=0$ such that $V(t_{0})=V_{0}$. We call $V(t)$
the parallel transport of $V(t_{0})$ along $c$.
\end{definition}

For complete Riemannian manifold, given $x,y\in\mathcal{X}$, there exists a
minimizing geodesic curve $\gamma:[0,1]\rightarrow\mathcal{X}$ joining $x$ to
$y$. Given $V\in T_{x}\mathcal{X}$, let $V(t)$ be the parallel transport of
$V$ along $\gamma$, then we denote $P_{x}^{y}V=:V(1)\in T_{y}\mathcal{X}$,
i.e. we transport the vector $V$ in $T_{x}\mathcal{X}$ to $T_{y}\mathcal{X}$.

\begin{definition}
\label{def: Lip}A vector field $X$ on $\mathcal{X}$ is said to be globally
Lipschitz continuous on $\mathcal{X}$, if there exists a constant $L>0$ such
that for $p,q\in\mathcal{X}$ and all $\gamma$ geodesic joining $p$ to $q$,
there holds%
\[
\left\vert P_{p}^{q}X(p)-X(q)\right\vert \leq Ld(p,q)
\]
where $|\cdot|$ is the norm induced by the Riemannian metric.
\end{definition}

It can be easily shown that if $|\nabla_{c^{\prime}(0)}X|\leq L$ for all
$c(t)$ such that $\left\vert c^{\prime}(0)\right\vert =1$, then, $X$ is
Lipschitz continuous with constant $L$. Since the Levi-Civita connection
$\nabla$ is an affine connection, we have $\nabla_{v}V=|v|\nabla_{v/|v|}V$,
consequently, $|\nabla_{c^{\prime}(0)}X|\leq L$ is equivalent to $|\nabla
_{v}V|\leq L|v|$ for all $v$. We will see in the following that in Riemannian
manifold, it's the covariant derivative rather than the tangent map which
takes into play.

\begin{lemma}
\label{lem: esti of Lip}Given that the system is globally Lipschitz continuous
with constant $L$, then there holds the following estimation%
\begin{equation}
d(x_{1},x_{2})e^{-L(\tau-t)}\leq d(\phi(\tau;t,x_{1})),\phi(\tau;t,x_{2}))\leq
d(x_{1},x_{2})e^{L(\tau-t)},\ \forall\tau\geq t,\ \forall x\in\mathcal{X}.
\label{esti of Lip}%
\end{equation}

\end{lemma}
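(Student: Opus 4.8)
The plan is to estimate the rate of change of the Riemannian distance between two flows $x_1(\tau)=\phi(\tau;t,x_1)$ and $x_2(\tau)=\phi(\tau;t,x_2)$ by differentiating $\tau\mapsto d(x_1(\tau),x_2(\tau))$ and invoking the first variation formula of arc length together with the Lipschitz estimate on the vector field. First I would fix $\tau$, assume for the moment that $x_1(\tau)\neq x_2(\tau)$, and pick a minimizing geodesic $\gamma:[0,1]\to\mathcal{X}$, parametrized proportionally to arc length, joining $x_1(\tau)$ to $x_2(\tau)$, so that $\ell(\gamma)=d(x_1(\tau),x_2(\tau))$. Flowing the endpoints produces a variation $F(\sigma,s)$ of $\gamma$ with $F(0,s)=\gamma(s)$, $\partial F/\partial \sigma|_{s=0}=f(\tau,x_1(\tau))$ and $\partial F/\partial\sigma|_{s=1}=f(\tau,x_2(\tau))$. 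Since $\ell(F(\sigma,\cdot))\geq d(x_1(\tau+\sigma),x_2(\tau+\sigma))$ with equality at $\sigma=0$, the upper Dini derivative of $\tau\mapsto d(x_1(\tau),x_2(\tau))$ is bounded above by $\frac{d}{d\sigma}\big|_{\sigma=0}\ell(F(\sigma,\cdot))$, which by the geodesic form of the first variation formula equals $\langle \gamma'(1)/|\gamma'(1)|, f(\tau,x_2(\tau))\rangle - \langle \gamma'(0)/|\gamma'(0)|, f(\tau,x_1(\tau))\rangle$.

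The next step is to bound this boundary term. Let $u(s)$ be the unit tangent field of $\gamma$; then the expression above is $\langle u(1), f(\tau,x_2(\tau))\rangle - \langle u(0), f(\tau,x_1(\tau))\rangle$. Writing $P:=P_{x_1(\tau)}^{x_2(\tau)}$ for parallel transport along $\gamma$, and using that parallel transport is a linear isometry that carries $u(0)$ to $u(1)$ (the unit tangent of a geodesic is parallel), we get $\langle u(1), f(\tau,x_2(\tau))\rangle = \langle u(0), P^{-1}f(\tau,x_2(\tau))\rangle$, hence the boundary term equals $\langle u(0),\, P^{-1}f(\tau,x_2(\tau)) - f(\tau,x_1(\tau))\rangle$. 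By Cauchy–Schwarz this is at most $\big|P^{-1}f(\tau,x_2(\tau)) - f(\tau,x_1(\tau))\big|$, and since $P^{-1}=P_{x_2(\tau)}^{x_1(\tau)}$ is an isometry this is at most $L\,d(x_1(\tau),x_2(\tau))$ by Definition \ref{def: Lip} applied along $\gamma$ (run backwards). Thus $D^+ d(x_1(\tau),x_2(\tau)) \leq L\, d(x_1(\tau),x_2(\tau))$; a symmetric argument (or replacing $f$ by $-f$, i.e. running time backwards) gives the matching lower bound $D^+ d(x_1(\tau),x_2(\tau)) \geq -L\, d(x_1(\tau),x_2(\tau))$. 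The comparison lemma for differential inequalities then yields both inequalities in \eqref{esti of Lip} by integrating from $t$ to $\tau$.

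The main obstacle — and the point needing the most care — is the non-smoothness of $\tau\mapsto d(x_1(\tau),x_2(\tau))$: the distance function is only Lipschitz, not differentiable, at points where the minimizing geodesic fails to be unique (cut locus), and one also must handle the instant when $x_1(\tau)=x_2(\tau)$, where completeness forces the two trajectories to coincide thereafter by uniqueness of solutions. I would address this by working with the upper Dini derivative $D^+$ throughout: the variational comparison $\ell(F(\sigma,\cdot))\geq d(\cdot,\cdot)$ with equality at $\sigma=0$ shows that for \emph{any} choice of minimizing geodesic the one-sided difference quotient of $d$ is dominated by that of $\ell$, so the bound on $D^+d$ holds regardless of uniqueness; and the set where $x_1(\tau)=x_2(\tau)$ is handled separately, noting that once the trajectories meet they stay together, so the estimate is trivial there. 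With $D^+ d \leq L d$ valid on the open set where the trajectories differ and $d$ continuous, the standard comparison argument gives $d(x_1(\tau),x_2(\tau))\leq d(x_1,x_2)e^{L(\tau-t)}$, and likewise for the lower bound. A secondary technical point is justifying that flowing the endpoints of $\gamma$ along $f$ genuinely defines an admissible variation with the claimed transversal vectors at $s=0,1$; this follows from smoothness of the flow of the augmented (time-invariant) vector field $\tilde f$ and differentiable dependence on initial conditions.
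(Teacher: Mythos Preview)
Your argument is correct and follows essentially the same route as the paper: differentiate the distance between the two flows via the first variation formula along a minimizing geodesic, use that parallel transport carries the unit tangent to itself and is an isometry to rewrite the boundary terms as a single inner product, bound it by the Lipschitz constant, and integrate the resulting differential inequality. The paper uses the semi-group property of the flow to restart the variation at each later time, which amounts to exactly your step of choosing a new minimizing geodesic at each $\tau$; you are simply more explicit (and more careful) about the Dini-derivative and cut-locus issues that the paper leaves implicit.
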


\begin{proof}
Suppose that $x_{1}$ and $x_{2}$ is joined by normalized geodesic
$\gamma:[0,\hat{s}]\rightarrow\mathcal{X}$, with $\gamma(0)=x_{1}$ and
$\gamma(\hat{s})=x_{2}$, where $\hat{s}$ is the length of $\gamma$. Then the
map $F(t,s)=\phi(t;t_{0},\gamma(s))$ defines a variation of $\gamma$. By the
first variation formula of arc length, we have%
\begin{align}
\left.  \frac{d}{d\tau}\right\vert _{\tau=t}d(\phi(\tau;t,x_{1})),\phi
(\tau;t,x_{2})) &  =\left.  \left\langle \frac{\partial\phi(t;t_{0}%
,\gamma(s))}{\partial s},\frac{\partial\phi(t;t_{0},\gamma(s))}{\partial
t}\right\rangle \right\vert _{s=0,\tau=t}^{s=\hat{s},\tau=t}\nonumber\\
&  =\left\langle \gamma^{\prime}(\hat{s}),f(x_{2})\right\rangle -\left\langle
\gamma^{\prime}(0),f(x_{1})\right\rangle \nonumber\\
&  =\left\langle P_{x_{2}}^{x_{1}}\gamma^{\prime}(\hat{s}),P_{x_{2}}^{x_{1}%
}f(x_{2})\right\rangle -\left\langle \gamma^{\prime}(0),f(x_{1})\right\rangle
\nonumber\\
&  =\left\langle \gamma^{\prime}(0),P_{x_{2}}^{x_{1}}f(x_{2})-f(x_{1}%
)\right\rangle ,\label{eq:0}%
\end{align}
where the third equality follows from the inner product preserving property of
the parallel transport operator.
Since $\gamma$ is normalized, by the Lipschitz continuity, we have%
\[
\left\Vert \left.  \frac{d}{d\tau}\right\vert _{\tau=t}d(\phi(\tau
;t,x_{1})),\phi(\tau;t,x_{2}))\right\Vert \leq Ld(x_{1},x_{2}).
\]
Using the semi-group property of the flow, for any $s>t$, we have%
\begin{align*}
&  \left.  \frac{d}{d\tau}\right\vert _{\tau=s}d(\phi(\tau;t,x_{1})),\phi
(\tau;t,x_{2}))\\
&  =\left.  \frac{d}{d\tau}\right\vert _{\tau=s}d(\phi(\tau;s,\phi
(s;t,x_{1})),\phi(\tau;s,\phi(s;t,x_{2}))),
\end{align*}
hence%
\[
\left\Vert \left.  \frac{d}{d\tau}\right\vert _{\tau=s}d(\phi(\tau
;t,x_{1})),\phi(\tau;t,x_{2}))\right\Vert \leq Ld(\phi(s;t,x_{1}%
),\phi(s;t,x_{2})).
\]
Or equivalently,%
\[
-Ld(\phi(\tau;t,x_{1}),\phi(\tau;t,x_{2}))\leq\frac{d}{d\tau}d(\phi
(\tau;t,x_{1})),\phi(\tau;t,x_{2}))\leq Ld(\phi(\tau;t,x_{1}),\phi
(\tau;t,x_{2})),
\]
from which we get (\ref{esti of Lip}).
\end{proof}

\begin{theorem}
\label{converse Exp}Assume that $f(\cdot,x)$ is globally Lipschitz (with
constant $L$). Let $x_{\ast}$ be a UGES equilibrium point of the system
(\ref{Sys}) on the $\mathcal{X}$. Then there exists a Lyapunov candidate
$V(t,x)$ verifying the following properties:

\begin{enumerate}
\item There exist two positive constants $c_{1}$ and $c_{2}$, such that%
\begin{equation}
c_{1}d(x,x_{\ast})\leq V(t,x)\leq c_{2}d(x,x_{\ast}),\ \forall x\in
\mathcal{X}. \label{eq: thm: 1}%
\end{equation}

\item The Lie derivative of $V(t,x)$ in the sense of Definition
\ref{def: tLie} along the system satisfies%
\begin{equation}
\mathcal{L}_{f}V(t,x)\leq-c_{3}V(t,x) \label{eq: thm: 2}%
\end{equation}
where $c_{3}$ is a positive constant.

\item If $d(\cdot,x_{\ast}):\mathcal{X}\rightarrow\mathbb{R}$ is class
$\mathcal{C}^{1}$. Then for every $t$, the differential of $V(t,x)$,
d$V(t,x)\in T^{\ast}\mathcal{X}$ is uniformly bounded on $T^{\ast}\mathcal{X}%
$:%
\begin{equation}
\left\vert \text{d}V(t,x)\right\vert \leq c_{4} \label{eq: thm: 3}%
\end{equation}
where $c_{4}$ is a positive constant independent of $t$ and $x$.
\end{enumerate}
\end{theorem}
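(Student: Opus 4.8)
The plan is to mimic the classical Euclidean construction of a converse Lyapunov function for exponentially stable systems (the ``integrate the solution norm along the flow'' recipe), but with the Riemannian distance $d(\cdot,x_\ast)$ playing the role of the norm, and with Lemma~\ref{lem: esti of Lip} supplying the lower bound on decay that makes the construction work. Concretely, I would define
\begin{equation}
V(t,x)=\int_{t}^{t+T} d\bigl(\phi(\tau;t,x),x_{\ast}\bigr)\,\mathrm{d}\tau,
\label{eq:plan:def}
\end{equation}
for a suitably large but fixed constant $T>0$ to be chosen at the end. The quantity inside is well-defined and finite since $\phi$ is the (complete) flow and $\tau\mapsto d(\phi(\tau;t,x),x_\ast)$ is continuous.

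For item 1, the upper bound $V(t,x)\le c_2\,d(x,x_\ast)$ follows immediately from UGES: $d(\phi(\tau;t,x),x_\ast)\le Ke^{-\lambda(\tau-t)}d(x,x_\ast)$, so integrating gives $c_2=K(1-e^{-\lambda T})/\lambda\le K/\lambda$. For the lower bound I would use the backward estimate in \eqref{esti of Lip}: writing $x=\phi(t;\tau,\phi(\tau;t,x))$ and applying Lemma~\ref{lem: esti of Lip} with the roles reversed gives $d(\phi(\tau;t,x),x_\ast)\ge e^{-L(\tau-t)}d(x,x_\ast)$ (using that $x_\ast$ is an equilibrium, so $\phi(\tau;t,x_\ast)=x_\ast$). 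Integrating over $[t,t+T]$ yields $V(t,x)\ge d(x,x_\ast)(1-e^{-LT})/L=:c_1 d(x,x_\ast)$, which is strictly positive; this also shows $V$ is a legitimate Lyapunov candidate (positive definite, zero exactly at $x_\ast$).

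For item 2, I would compute the timed Lie derivative using the semigroup property: since $\phi(\tau;t,x)=\phi(\tau;s,\phi(s;t,x))$, the map $s\mapsto V(s,\phi(s;t,x))$ equals $\int_{s}^{s+T} d(\phi(\tau;t,x),x_\ast)\,\mathrm{d}\tau$, so differentiating in $s$ (fundamental theorem of calculus) gives
\begin{equation}
\mathcal{L}_{f}V(t,x)=d\bigl(\phi(t+T;t,x),x_{\ast}\bigr)-d\bigl(\phi(t;t,x),x_{\ast}\bigr)=d\bigl(\phi(t+T;t,x),x_{\ast}\bigr)-d(x,x_{\ast}).
\label{eq:plan:lie}
\end{equation}
By UGES the first term is at most $Ke^{-\lambda T}d(x,x_\ast)$; choosing $T$ large enough that $Ke^{-\lambda T}\le \tfrac12$ makes \eqref{eq:plan:lie} at most $-\tfrac12 d(x,x_\ast)$, and then using $d(x,x_\ast)\ge V(t,x)/c_2$ gives \eqref{eq: thm: 2} with $c_3=1/(2c_2)$. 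A small subtlety here is that $V$ in \eqref{eq:plan:def} need only be as smooth as the integrand permits; the Lie-derivative identity \eqref{eq:plan:lie} holds as a genuine derivative in $t$ along the flow because the dependence is through an integral with continuous integrand, so I would present $\mathcal{L}_f V$ via the limit definition of Definition~\ref{def: tLie} rather than assuming differentiability a priori.

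For item 3, assuming $d(\cdot,x_\ast)$ is $\mathcal{C}^1$ (away from $x_\ast$ and its cut locus, which is the content of the hypothesis), I would differentiate \eqref{eq:plan:def} in $x$: $\mathrm{d}_x V(t,x)=\int_t^{t+T}\mathrm{d}_x\bigl[d(\phi(\tau;t,x),x_\ast)\bigr]\,\mathrm{d}\tau$. The integrand is the composition of $\mathrm{d}\,d(\cdot,x_\ast)$ at $\phi(\tau;t,x)$ with the differential of the flow map $x\mapsto\phi(\tau;t,x)$. The gradient of the distance function has unit norm wherever it is differentiable (first variation of arc length: $|\nabla d(\cdot,x_\ast)|=1$), so $|\mathrm{d}_x[d(\phi(\tau;t,x),x_\ast)]|\le \|D_x\phi(\tau;t,\cdot)\|$. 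The operator norm of the flow differential is controlled by the Lipschitz constant: the Jacobi-type / variational bound $\tfrac{d}{d\tau}|D_x\phi|\le L|D_x\phi|$ (the infinitesimal version of Lemma~\ref{lem: esti of Lip}, or directly the estimate \eqref{esti of Lip} read as a bound on $d(\phi(\tau;t,x_1),\phi(\tau;t,x_2))/d(x_1,x_2)$ in the limit $x_2\to x_1$) gives $\|D_x\phi(\tau;t,\cdot)\|\le e^{L(\tau-t)}$. Hence $|\mathrm{d}_x V(t,x)|\le\int_t^{t+T}e^{L(\tau-t)}\,\mathrm{d}\tau=(e^{LT}-1)/L=:c_4$, independent of $t$ and $x$. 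The main obstacle I anticipate is making the regularity bookkeeping honest: $d(\cdot,x_\ast)$ is only $\mathcal{C}^1$ on the complement of the cut locus, so I should either restrict attention to that set (as the hypothesis of item 3 invites) or note that \eqref{eq:plan:def} is Lipschitz in $x$ with the stated constant everywhere and $\mathcal{C}^1$ where $d(\cdot,x_\ast)$ is, and that the Lipschitz constant bound is exactly \eqref{eq: thm: 3} interpreted via the local Lipschitz constant; the cut-locus issue is genuinely the delicate point and is worth an explicit remark rather than being swept under the rug.
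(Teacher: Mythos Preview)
Your proposal is correct and follows essentially the same approach as the paper: the same candidate $V(t,x)=\int_t^{t+\delta}d(\phi(\tau;t,x),x_\ast)\,\mathrm{d}\tau$, the same use of Lemma~\ref{lem: esti of Lip} and UGES for the sandwich bounds, the same semigroup computation for $\mathcal{L}_f V$, and the same final constant $c_4=(e^{L\delta}-1)/L$. The only notable difference is in Item~3, where you obtain $\|D_x\phi(\tau;t,\cdot)\|\le e^{L(\tau-t)}$ by passing to the limit in the two-point Lipschitz estimate, whereas the paper derives it via an explicit covariant-derivative computation (using the symmetry of the Levi--Civita connection and the vanishing of $[f,\phi_\ast v]$); both arguments are valid and yield the same bound.
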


\begin{proof}
\textbf{Item 1:}\ Consider the function%
\begin{equation}
V(t,x)=\int_{t}^{t+\delta}d(\phi(\tau;t,x),x_{\ast})\text{d}\tau.
\label{LF candidate}%
\end{equation}
Setting $x_{2}=x_{\ast}$ in Lemma \ref{lem: esti of Lip}, we get the following
estimate by the fact that $x_{\ast}$ is an equilibrium point:%
\begin{equation}
d(x,x_{\ast})e^{-L(\tau-t)}\leq d(\phi(\tau;t,x)),x_{\ast})\leq d(x,x_{\ast
})e^{L(\tau-t)},\ \forall\tau\geq t,\ \forall x\in X. \label{bound of dist}%
\end{equation}
Thus the defined function (\ref{LF candidate}) admits the following bounds:%
\[
V(t,x)=\int_{t}^{t+\delta}d(\phi(\tau;t,x),x_{\ast})\text{d}\tau\geq\int%
_{t}^{t+\delta}d(x,x_{\ast})e^{-L(\tau-t)}\text{d}\tau=\frac{1-e^{-L\delta}%
}{L}d(x,x_{\ast}),
\]
and%
\[
V(t,x)=\int_{t}^{t+\delta}d(\phi(\tau;t,x),x_{\ast})\text{d}\tau\leq\int%
_{t}^{t+\delta}Ke^{-\lambda(\tau-t)}d(x,x_{\ast})\text{d}\tau=\frac
{K(1-e^{-\lambda\delta})}{L}d(x,x_{\ast}).
\]
So we can find two positive constants $c_{1},$ $c_{2}$ such that%
\begin{equation}
c_{1}d(x,x_{\ast})\leq V(t,x)\leq c_{2}d(x,x_{\ast}),\ \forall x\in
\mathcal{X}. \label{bound of LF}%
\end{equation}
\textbf{Item 2:}\ In order to estimate the evolution of $V(t,x)$ along the
system, we again utilize the semi-group property:%
\[
V(s,\phi(s;t,x))=\int_{s}^{s+\delta}d(\phi(\tau;s,\phi(s;t,x)),x_{\ast
})\text{d}\tau=\int_{s}^{s+\delta}d(\phi(\tau;t,x),x_{\ast})\text{d}\tau.
\]
Therefore%
\begin{align}
\mathcal{L}_{f}V(t,x)  &  =\left.  \frac{d}{ds}\right\vert _{s=t}%
V(s,\phi(s;t,x))=d(\phi(t+\delta;t,x),x_{\ast})-d(x,x_{\ast})\nonumber\\
&  \leq-(1-Ke^{-\lambda\delta})d(x,x_{\ast})\nonumber\\
&  =-K^{\prime}d(x,x_{\ast}),\ \forall t,\ \forall x\in\mathcal{X},
\label{Neg of dV}%
\end{align}
where $\delta$ is chosen such that $K^{\prime}>0$. By (\ref{bound of LF}),
\[
\mathcal{L}_{f}V(t,x)\leq-K^{\prime}d(x,x_{\ast})\leq-\frac{K^{\prime}}{c_{2}%
}V(t,x).
\]
Now $c_{3}$ can be set as $c_{3}=K^{\prime}/c_{2}$.
\textbf{Item 3: }Denote $h_{t}(x)=V(t,x)$, then for any $v\in T_{x}%
\mathcal{X}$,
\[
\text{d}h_{t}(v)=\left.  \frac{\text{d}}{\text{d}s}\right\vert _{s=0}%
h_{t}(c(s))
\]
where $c:[-\varepsilon,\varepsilon]\rightarrow\mathcal{X}$ is a smooth curve
with $c^{\prime}(0)=v$. So%
\begin{equation}
\text{d}h_{t}(v)=\int_{t}^{t+\delta}\left.  \frac{\text{d}}{\text{d}%
s}\right\vert _{s=0}d(\phi(\tau;t,c(s)),x_{\ast})\text{d}\tau\label{pf: 0}%
\end{equation}
By first variation formula,
\[
\left.  \frac{\text{d}}{\text{d}s}\right\vert _{s=0}d(\phi(\tau
;t,c(s)),x_{\ast})=\left\langle \phi(\tau;t,x)_{\ast}v,\gamma^{\prime
}(1)\right\rangle ,
\]
where $\phi(\tau;t,x)_{\ast}v$ is the push forward of the vector $c^{\prime
}(0)$ by the map $x\mapsto\phi(\tau;t,x)$ and $\gamma$ is the normalized
geodesic joining $x_{\ast}$ to $\phi(\tau;t,x)$. Hence
\begin{equation}
\left\Vert \left.  \frac{\text{d}}{\text{d}s}\right\vert _{s=0}d(\phi
(\tau;t,c(s)),x_{\ast})\right\Vert ^{2}\leq\left\langle \phi(\tau;t,x)_{\ast
}v,\phi(\tau;t,x)_{\ast}v\right\rangle . \label{pf: 1}%
\end{equation}
Now we estimate the term on the right hand side.
\begin{align*}
\frac{\text{d}}{\text{d}\tau}\frac{1}{2}\left\langle \phi(\tau;t,x)_{\ast
}v,\phi(\tau;t,x)_{\ast}v\right\rangle  &  =\left\langle \frac{\text{D}%
}{\text{d}\tau}\phi(\tau;t,x)_{\ast}v,\phi(\tau;t,x)_{\ast}v\right\rangle \\
&  =\left\langle \nabla_{f(\phi(\tau;t,x))}\phi(\tau;t,x)_{\ast}v,\phi
(\tau;t,x)_{\ast}v\right\rangle \\
&  =\left\langle \nabla_{\phi(\tau;t,x)_{\ast}v}f(\phi(\tau;t,x)),\phi
(\tau;t,x)_{\ast}v\right\rangle \\
&  +\left\langle [f(\phi(\tau;t,x)),\phi(\tau;t,x)_{\ast}v],\phi
(\tau;t,x)_{\ast}v\right\rangle ,
\end{align*}
where we have used the symmetry of the Levi-Civita connection, i.e.%
\[
\nabla_{X}Y-\nabla_{Y}X=[X,Y].
\]
However,%
\begin{align*}
\lbrack f(\phi(\tau;t,x)),\phi(\tau;t,x)_{\ast}v]  &  =L_{f(\phi(\tau
;t,x))}\phi(\tau;t,x)_{\ast}v\\
&  =\frac{\text{d}}{\text{d}\tau}\phi^{\ast}\phi_{\ast}v=\frac{\text{d}%
}{\text{d}\tau}v=0.
\end{align*}
Therefore%
\begin{align*}
\frac{\text{d}}{\text{d}\tau}\frac{1}{2}\left\langle \phi(\tau;t,x)_{\ast
}v,\phi(\tau;t,x)_{\ast}v\right\rangle  &  =\left\langle \nabla_{\phi
(\tau;t,x)_{\ast}v}f(\phi(\tau;t,x)),\phi(\tau;t,x)_{\ast}v\right\rangle \\
&  \leq L\left\langle \phi(\tau;t,x)_{\ast}v,\phi(\tau;t,x)_{\ast
}v\right\rangle
\end{align*}
where we have used that fact that $\left\vert \nabla_{v}f\right\vert \leq
L|v|$. So%
\[
\left\langle \phi(\tau;t,x)_{\ast}v,\phi(\tau;t,x)_{\ast}v\right\rangle \leq
e^{2L(\tau-t)}|v|^{2}%
\]
or%
\[
\left\vert \phi(\tau;t,x)_{\ast}v\right\vert \leq e^{L(\tau-t)}|v|,\ \forall
\tau\geq t.
\]%
\[
\text{d}h_{t}(v)\leq\int_{t}^{t+\delta}e^{L(\tau-t)}|v|\text{d}\tau
=\frac{e^{L\delta}-1}{L}|v|.
\]
So we have obtained%
\[
\left\vert \text{d}_{x}V(t,x)(v)\right\vert \leq c_{4}|v|
\]
or%
\[
|\text{d}_{x}V(t,x)|\leq c_{4},\ \forall x\in\mathcal{X},
\]
where $c_{4}=(e^{L\gamma}-1)/L$.
\end{proof}

\begin{remark}
\label{rmk: item 3}In Item 3, we have asked $d(x,x_{\ast})$ to be
differentiable. This is however, not guaranteed in general. For example, in
Euclidean space, $d(x,x_{\ast})=|x-x_{\ast}|$, which is not differentiable at
the point $x_{\ast}$. A possible way to resolve this problem is to consider
the following Lyapunov candidate:
\begin{equation}
V(t,x)=\int_{t}^{t+\gamma}d(\phi(\tau;t,x),x_{\ast})^{p}\text{d}%
\tau\label{LF candidate p}%
\end{equation}
where $p\geq1$. In the Euclidean case, we set $p=2$, and $d(x,x_{\ast}%
)^{2}=|x-x_{\ast}|^{2}$ is differentiable. In effect, it can be easily
verified (\ref{LF candidate p}) can still serve as a Lyapunov function. The
proof can be carried out in exactly the same way as Theorem \ref{converse Exp}%
. However, when $p\neq1$, the claims of Item 1 and 3 should change
accordingly. For example, the estimate of $V(t,x)$ becomes
\[
c_{1} d(x,x_{*})^{p} \le V(t,x) \le c_{2} d(x,x_{*})^{p}
\]
and $dV$ satisfies
\[
\left\vert \text{d}V(t,x)\right\vert \leq c_{4}d(x,x_{\ast})^{p-1}.
\]

\end{remark}

\begin{remark}
In contrast to the proof in \cite{taringoo2013local}, all the proof here is
coordinate-free. So if the system has an invariant set $U$ as region of
attraction, then a Lyapunov function can be naturally defined everywhere on
$U$. Moreover, the Lyapunov function can be constructed such that $V(t,x)
\rightarrow\infty$ when $x$ approaches the boundary of $U$.
\end{remark}

\begin{remark}
In Euclidean space, the Lyapunov candidate becomes%
\[
V(t,x)=\int_{t}^{t+\gamma}|\phi(\tau;t,x)|^{2}\text{d}x
\]
by setting $p=2$ in (\ref{LF candidate p}), reducing to the standard
construction, see \cite{khalil2002nonlinear}.
\end{remark}

\section{Discussions and applications\label{sec: dis and ext}}

Theorem \ref{converse Exp} requires the system to be globally stable. However,
such requirement is not essential. In fact, all the procedures of the proof
can be done locally in the same manner. Hence we can obtain local version of
converse theorems.

The extension to asymptotically stability is also not difficult. Following
\cite{khalil2002nonlinear}, we just need to modify the Lyapunov candidate to%
\[
V(t,x)=\int_{t}^{\infty}G(d(\phi(\tau;t,x),x_{\ast}))\text{d}\tau
\]
where $G$ is constructed from the following Massera's lemma.

\begin{lemma}
[Massera]Let $g:\mathbb{R}_{+}\rightarrow\mathbb{R}$ be a positive,
continuous, strictly decreasing function with $g(t)\rightarrow0$ as
$t\rightarrow\infty$. Let $h:\mathbb{R}_{+}\rightarrow\mathbb{R}$ be a
positive, continuous, non decreasing function. Then, there exists a function
$G(t)$ such that
\end{lemma}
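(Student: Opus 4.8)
The plan is to construct $G$ explicitly as the primitive of a carefully chosen weight function, reducing everything to the construction of a single one‑dimensional function. Recalling the conclusion required of $G$ (as in the classical Euclidean statement), we want $G$ and $G'$ to be positive, continuous and strictly increasing with $G(0)=G'(0)=0$, and we want both $\int_0^\infty G(g(t))\,dt<\infty$ and $\int_0^\infty G'(g(t))\,h(t)\,dt<\infty$. The first thing I would observe is that it suffices to produce a continuous, strictly increasing function $\psi:[0,\infty)\to\mathbb{R}_+$ with $\psi(0)=0$ and $\int_0^\infty \psi(g(t))\,h(t)\,dt<\infty$, and then set $G(s)=\int_0^s \psi(\tau)\,d\tau$. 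Indeed, $G$ is then automatically $\mathcal{C}^1$ with $G'=\psi$, positive and strictly increasing on $(0,\infty)$ with $G(0)=0$; and since $\psi$ is increasing, $G(g(t))\le g(t)\,\psi(g(t))\le g(0)\,\psi(g(t))$, while $h(t)\ge h(0)>0$ gives $G(g(t))\le \tfrac{g(0)}{h(0)}\,\psi(g(t))\,h(t)$, so finiteness of the second integral forces finiteness of the first.

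Next I would exploit the structure of $g$. Since $g$ is continuous, strictly decreasing and tends to $0$, it maps $[0,\infty)$ bijectively onto $(0,g(0)]$ and $g(t)\in(0,g(0)]$ for all $t\ge 0$, so $\psi$ (and $G$) is only ever evaluated on $(0,g(0)]$ — on $(g(0),\infty)$ one extends $\psi$ by any continuous strictly increasing branch to satisfy the ``defined on $[0,\infty)$'' requirement. Put $s_n:=g(n)$ for $n=0,1,2,\dots$, so that $g(0)=s_0>s_1>s_2>\cdots\to 0$ and, crucially, $g(t)\in[s_{n+1},s_n]$ whenever $t\in[n,n+1]$. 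I then define $\psi$ on each $[s_{n+1},s_n]$ by linear interpolation between the nodal values $\psi(s_n)=\mu_n$, where the sequence $\mu_n>0$ is chosen by the recursion $\mu_0=1/h(1)$ and
\[
\mu_{n}=\min\!\Big\{\tfrac{2^{-n}}{h(n+1)},\ \tfrac{1}{2}\mu_{n-1}\Big\},\qquad n\ge 1 .
\]
Then $\{\mu_n\}$ is strictly decreasing to $0$, so the resulting piecewise‑linear $\psi$ is continuous, strictly increasing on $(0,g(0)]$, and extends continuously by $\psi(0)=0$.

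The verification is then routine: on $t\in[n,n+1]$ we have $\psi(g(t))\le\mu_n$ and $h(t)\le h(n+1)$, so, using $\mu_n\le 2^{-n}/h(n+1)$,
\[
\int_0^\infty \psi(g(t))\,h(t)\,dt\ \le\ \sum_{n=0}^\infty \mu_n\,h(n+1)\ \le\ \sum_{n=0}^\infty 2^{-n}\ <\ \infty ,
\]
and by the reduction above this gives both required integral bounds, while the monotonicity, positivity, continuity and $\mathcal{C}^1$ regularity of $G=\int_0^\cdot\psi$ and $G'=\psi$ are immediate from the construction. If a $\mathcal{C}^\infty$ rather than $\mathcal{C}^1$ function $G$ is desired, one mollifies $\psi$ on each interval, shrinking the smoothing enough to preserve the same estimates.

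I expect the only genuine obstacle to be the tension between making $G'=\psi$ \emph{small on the tail} — small enough to defeat the possibly fast growth of the unbounded weight $h$ — while keeping $\psi$ \emph{globally strictly increasing and continuous}. The resolution is the observation that the piece of the graph of $\psi$ lying over $[s_{n+1},s_n]$ interacts only with the values of $h$ on the \emph{bounded} interval $[n,n+1]$, where $h\le h(n+1)<\infty$; this leaves room to push $\psi$ down to the level $2^{-n}/h(n+1)$, and taking the running half‑minimum in the recursion for $\mu_n$ simultaneously secures summability of $\sum_n \mu_n h(n+1)$ and forces the nodal values to be strictly decreasing, so the interpolant is genuinely increasing across every junction.
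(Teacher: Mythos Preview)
The paper does not prove Massera's lemma at all: it is quoted as a classical auxiliary result (the paper follows \cite{khalil2002nonlinear}) and is used as a black box in the passage to Theorem~\ref{converse Asmp}. So there is no ``paper's own proof'' to compare against; your proposal has to be judged on its own merits.

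On those merits, your construction is sound and is one of the standard ways to obtain $G$: set $G'=\psi$ piecewise linear on the partition $\{s_n=g(n)\}$, force the nodal values $\mu_n$ to decay fast enough against $h(n+1)$, and integrate. The reduction ``finiteness of $\int_0^\infty \psi(g)h$ implies finiteness of $\int_0^\infty G(g)$'' via $G(s)\le s\psi(s)$ and $h\ge h(0)>0$ is correct, and the summation estimate $\sum_n\mu_n h(n+1)\le\sum_n 2^{-n}$ is clean.

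One point you should state explicitly: the lemma as recorded in the paper asks for the bounds to hold for \emph{every} continuous $u$ with $0\le u(t)\le g(t)$, with constants $k_1,k_2$ independent of $u$, whereas you only verify them for $u=g$. This is harmless --- since $G$ and $G'=\psi$ are increasing, $G(u(t))\le G(g(t))$ and $G'(u(t))h(t)\le G'(g(t))h(t)$ pointwise, so the same constants work --- but the sentence deserves to appear. A second cosmetic point: your $\psi$ is only continuous (piecewise linear), so $G$ is $\mathcal{C}^1$ but not $\mathcal{C}^2$; that is all the lemma requires (class~$\mathcal{K}$ for $G$ and $G'$), and your final remark about mollification covers any stronger regularity one might want.
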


\begin{enumerate}
\item $G$ and its derivative $G^{\prime}$ are class $\mathcal{K}$ functions
defined for all $t\geq0$;

\item For any continuous function $u(t)$ that satisfies $0\leq u(t)\leq g(t)$
for all $t\geq0$, there exist positive constants $k_{1}$ and $k_{2}$,
independent of $u$, such that%
\[
\int_{0}^{\infty}G(u(t))\text{d}t\leq k_{1};\ \int_{0}^{\infty}G^{\prime
}(u(t))h(t)\text{d}t\leq k_{2}.
\]

\end{enumerate}

The rest of the proof can be done similarly as that of Theorem
\ref{converse Exp}. Hence we have the following theorem.

\begin{theorem}
\label{converse Asmp}Assume that $f(\cdot,x)$ is globally Lipschitz (with
constant $L$) in the sense of \ref{def: Lip}. Let $x_{\ast}$ be a UGAS
equilibrium point of the system (\ref{Sys}) on the $\mathcal{X}$, i.e.%
\[
d(\phi(t;t_{0},x_{0}),x_{\ast})\leq\beta(d(x_{0},x_{\ast}),t-t_{0}),\ \forall
t\geq t_{0},\ x_{0}\in\mathcal{X}%
\]
for a class $\mathcal{KL}$ function $\beta$. Then there exists a Lyapunov
candidate $V$ verifying the following two properties:

\begin{enumerate}
\item There exist a $\mathcal{C}^{1}$ function $V$, such that%
\[
\alpha_{1}(d(x,x_{\ast}))\leq V(t,x)\leq\alpha_{2}(d(x,x_{\ast})),\ \forall
x\in\mathcal{X}.
\]

\item The Lie derivative of $V(t,x)$ along the system satisfies%
\[
L_{f}V(t,x)\leq-\alpha_{3}(V(t,x))
\]

\item If $d(\cdot,x_{\ast}):\mathcal{X}\rightarrow\mathbb{R}$ is class
$\mathcal{C}^{1}$. Then for every $t$, the differential of $V(t,x)$,
d$V(t,x)\in T^{\ast}\mathcal{X}$ is uniformly bounded on $T^{\ast}\mathcal{X}%
$:%
\begin{equation}
\left\vert \text{d}V(t,x)\right\vert \leq\alpha_{4}(V(t,x))
\end{equation}
where $\alpha_{i},\ i=1,2,3,4$ are class $\mathcal{K}$ functions.
\end{enumerate}
\end{theorem}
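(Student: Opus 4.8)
Following the blueprint of Theorem~\ref{converse Exp}, the plan is to keep exactly the same construction but to compose the distance with the function $G$ furnished by Massera's lemma and to push the upper integration limit to infinity:
\[
V(t,x)=\int_{t}^{\infty}G\bigl(d(\phi(\tau;t,x),x_{\ast})\bigr)\,\mathrm{d}\tau .
\]
To run Massera's lemma I first fix its two data functions. For the decreasing function $g$ I use the UGAS estimate $d(\phi(\tau;t,x),x_{\ast})\le\beta(d(x,x_{\ast}),\tau-t)$; for the nondecreasing function $h$ I take $h(\sigma)=e^{L\sigma}$, which is legitimate because — as the computation of Item~3 in the proof of Theorem~\ref{converse Exp} already shows — the differential of the flow satisfies $|\phi(\tau;t,x)_{\ast}v|\le e^{L(\tau-t)}|v|$, so the flow expands differentials (and, by Lemma~\ref{lem: esti of Lip}, distances) by at most this factor. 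Massera's lemma then yields a $\mathcal{C}^{1}$ function $G$ with $G$ and $G'$ of class~$\mathcal{K}$, and the two bounds $\int_{0}^{\infty}G(u(\sigma))\,\mathrm{d}\sigma\le k_{1}$, $\int_{0}^{\infty}G'(u(\sigma))h(\sigma)\,\mathrm{d}\sigma\le k_{2}$ whenever $0\le u(\sigma)\le g(\sigma)$.

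The three items are then obtained in the same order and by the same manipulations as in Theorem~\ref{converse Exp}. \emph{Item~1.} For the lower bound, restrict the integral to $[t,t+1]$ and use the lower Lipschitz estimate of Lemma~\ref{lem: esti of Lip} (cf.\ \eqref{bound of dist}): $d(\phi(\tau;t,x),x_{\ast})\ge e^{-L}d(x,x_{\ast})$ there, so $V(t,x)\ge G\bigl(e^{-L}d(x,x_{\ast})\bigr)=:\alpha_{1}(d(x,x_{\ast}))$. For the upper bound, substitute $\tau=t+\sigma$, dominate the integrand by $G(\beta(d(x,x_{\ast}),\sigma))$ and apply the $k_{1}$-bound; this simultaneously shows the integral converges, so $V$ is finite. \emph{Item~2.} The semigroup identity $V(s,\phi(s;t,x))=\int_{s}^{\infty}G(d(\phi(\tau;t,x),x_{\ast}))\,\mathrm{d}\tau$ — used verbatim as in Theorem~\ref{converse Exp} — gives, on differentiating in $s$ at $s=t$,
\[
\mathcal{L}_{f}V(t,x)=-G\bigl(d(x,x_{\ast})\bigr),
\]
and, combining with $V(t,x)\le\alpha_{2}(d(x,x_{\ast}))$ so that $d(x,x_{\ast})\ge\alpha_{2}^{-1}(V(t,x))$, we get $\mathcal{L}_{f}V(t,x)\le-G\bigl(\alpha_{2}^{-1}(V(t,x))\bigr)=:-\alpha_{3}(V(t,x))$. \emph{Item~3.} Differentiating under the integral sign and using the first variation formula exactly as in Theorem~\ref{converse Exp},
\[
\mathrm{d}h_{t}(v)=\int_{t}^{\infty}G'\bigl(d(\phi(\tau;t,x),x_{\ast})\bigr)\bigl\langle\phi(\tau;t,x)_{\ast}v,\gamma_{\tau}'(1)\bigr\rangle\,\mathrm{d}\tau ,
\]
with $\gamma_{\tau}$ the minimizing normalized geodesic from $x_{\ast}$ to $\phi(\tau;t,x)$; bounding $|\langle\cdot,\cdot\rangle|\le|\phi(\tau;t,x)_{\ast}v|\le e^{L(\tau-t)}|v|$, using $G'(d(\phi(\tau;t,x),x_{\ast}))\le G'(\beta(d(x,x_{\ast}),\tau-t))$ (monotonicity of $G'$), and invoking the $k_{2}$-bound gives $|\mathrm{d}V(t,x)|\le\alpha_{4}(d(x,x_{\ast}))$; the very same estimate justifies differentiating under the integral, hence the $\mathcal{C}^{1}$-regularity claimed in Item~1.

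\emph{Where the difficulty sits.} Massera's lemma, applied once to a fixed pair $(g,h)$, only delivers the \emph{constants} $k_{1},k_{2}$, whereas Items~1 and~3 require bounds that are genuine class~$\mathcal{K}$ functions of $d(x,x_{\ast})$ valid on \emph{all} of $\mathcal{X}$, and no single decreasing $g$ dominates $\sigma\mapsto\beta(r,\sigma)$ uniformly in $r$ (its value at $\sigma=0$ is unbounded as $r\to\infty$). I would close this gap by first decomposing the $\mathcal{KL}$ estimate, writing $\beta(r,s)\le\theta_{2}\bigl(\theta_{1}(r)e^{-s}\bigr)$ for suitable $\theta_{1},\theta_{2}\in\mathcal{K}_{\infty}$ (Sontag's lemma on $\mathcal{KL}$ estimates); after the rescaling $u\mapsto\theta_{1}(d(x,x_{\ast}))^{-1}\theta_{2}^{-1}(u)$ one may run Massera's lemma with the fixed pair $g(\sigma)=e^{-\sigma}$, $h(\sigma)=e^{L\sigma}$, and the dependence on $d(x,x_{\ast})$ then re-enters only through the explicit monotone factor $\theta_{1}(d(x,x_{\ast}))$, turning $k_{1},k_{2}$ into bona fide class~$\mathcal{K}$ functions of $d(x,x_{\ast})$. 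Finally, Item~1's $\mathcal{C}^{1}$ assertion and Item~3 carry the caveat already recorded in Remark~\ref{rmk: item 3}: when $d(\cdot,x_{\ast})$ fails to be $\mathcal{C}^{1}$ at $x_{\ast}$ one replaces $d$ by $d^{p}$, $p\ge 1$ (Euclidean case $p=2$), which changes the comparison functions but not a single step above; all remaining ingredients — differentiation under the integral, the semigroup manipulation, the first variation formula, the identity $\nabla_{v}f=|v|\,\nabla_{v/|v|}f$ — are inherited unchanged from Theorem~\ref{converse Exp}.
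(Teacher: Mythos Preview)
Your proposal follows exactly the route the paper indicates: the same Lyapunov candidate $V(t,x)=\int_{t}^{\infty}G(d(\phi(\tau;t,x),x_{\ast}))\,\mathrm{d}\tau$ with $G$ supplied by Massera's lemma, and the same item-by-item verification paralleling Theorem~\ref{converse Exp}. The paper itself gives only a one-line sketch (``the rest of the proof can be done similarly''), so your explicit identification and resolution of the globality issue via Sontag's $\mathcal{KL}$-decomposition is in fact more careful than the paper's own treatment.
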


As application, we show that Theorem \ref{converse Exp} can be applied to
prove the input-to-state stability (ISS)\ of a class of systems. The classical
form of this theorem can be found in \cite{khalil2002nonlinear}.

\begin{corollary}
Consider the control system%
\begin{equation}
\dot{x}=f(t,x,u) \label{NL sys input}%
\end{equation}
on Riemannian manifold $\mathcal{X}$, where $f$ is $\mathcal{C}^{1}$ and
globally Lipschitz in $x$. Additionally, we assume $f$ is globally Lipschitz
in $u$ with constant $L$, i.e.%
\[
|f(t,x,u)-f(t,x,0)|\leq L|u|.
\]

If the unforced system $\dot{x}=f(t,x,0)$ is UGES with respect to equilibrium
point $x=0$, then the system (\ref{NL sys input}) is ISS.
\end{corollary}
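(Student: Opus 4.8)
The plan is to use Theorem~\ref{converse Exp}, applied to the unforced system $\dot x=f(t,x,0)$, as the source of an ISS--Lyapunov function for (\ref{NL sys input}), and then run the classical comparison argument of \cite{khalil2002nonlinear} in the intrinsic setting.

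\emph{Step 1 (Lyapunov function for the unforced flow).} By hypothesis $x\mapsto f(t,x,0)$ is globally Lipschitz in the sense of Definition~\ref{def: Lip} and $x_{\ast}=0$ is UGES for $\dot x=f(t,x,0)$, so Theorem~\ref{converse Exp} supplies a candidate with $c_1 d(x,x_{\ast})\le V(t,x)\le c_2 d(x,x_{\ast})$, $\mathcal L_{f(\cdot,\cdot,0)}V(t,x)\le -c_3 V(t,x)$, and $|\text{d}V(t,x)|\le c_4$. Since $d(\cdot,x_{\ast})$ need not be $\mathcal C^1$, I would in fact use the $p=2$ candidate of Remark~\ref{rmk: item 3}, $V(t,x)=\int_t^{t+\delta} d(\phi_0(\tau;t,x),x_{\ast})^2\,\text{d}\tau$ with $\phi_0$ the unforced flow, for which the same computation gives $c_1 d(x,x_{\ast})^2\le V(t,x)\le c_2 d(x,x_{\ast})^2$, $\mathcal L_{f(\cdot,\cdot,0)}V\le -c_3 V$, and $|\text{d}V(t,x)|\le c_4\,d(x,x_{\ast})$.

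\emph{Step 2 (the input term).} Fix a bounded input $u(\cdot)$ and regard the time-varying vector field as $g(t,x)=f(t,x,u(t))=f(t,x,0)+\big(f(t,x,u(t))-f(t,x,0)\big)$; the augmented $\dot s=1$ component is unchanged, so the timed Lie derivative of $V$ splits as $\mathcal L_g V(t,x)=\mathcal L_{f(\cdot,\cdot,0)}V(t,x)+\text{d}V(t,x)\big(f(t,x,u(t))-f(t,x,0)\big)$. Bounding the pairing by the Riemannian Cauchy--Schwarz inequality and invoking $|f(t,x,u)-f(t,x,0)|\le L|u|$ together with the gradient bound from Step~1,
\[
\mathcal L_g V(t,x)\le -c_3 V(t,x)+c_4 L\,d(x,x_{\ast})\,|u|_\infty\le -c_3 V(t,x)+\frac{c_4 L}{\sqrt{c_1}}\sqrt{V(t,x)}\,|u|_\infty .
\]
Hence $\mathcal L_g V(t,x)\le -\frac{c_3}{2}V(t,x)$ whenever $V(t,x)\ge \rho(|u|_\infty)$, where $\rho(r)=\big(2c_4 L/(c_3\sqrt{c_1})\big)^2 r^2$ is class $\mathcal K$; that is, $V$ is an ISS--Lyapunov function for (\ref{NL sys input}) with gain margin $\rho$. (With the $p=1$ candidate, when $d(\cdot,x_{\ast})$ is $\mathcal C^1$, one gets the affine inequality $\mathcal L_g V\le -c_3 V+c_4 L|u|_\infty$ directly, hence a linear ISS estimate.)

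\emph{Step 3 (comparison and ISS estimate).} Let $\phi(\cdot)$ be the solution of (\ref{NL sys input}) with $\phi(t_0)=x_0$. Since $\frac{d}{dt}V(t,\phi(t))=\mathcal L_g V(t,\phi(t))$, the comparison lemma together with the standard ISS--Lyapunov argument \cite{khalil2002nonlinear} gives $V(t,\phi(t))\le \max\{e^{-\frac{c_3}{2}(t-t_0)}V(t_0,x_0),\ \rho(|u|_\infty)\}$. Feeding in $c_1 d(x,x_{\ast})^2\le V\le c_2 d(x,x_{\ast})^2$ and $\sqrt{a+b}\le\sqrt a+\sqrt b$ yields $d(\phi(t),x_{\ast})\le \beta(d(x_0,x_{\ast}),t-t_0)+\gamma(|u|_\infty)$ with $\beta(r,s)=\sqrt{c_2/c_1}\,e^{-\frac{c_3}{4}s}r\in\mathcal{KL}$ and $\gamma(r)=\sqrt{\rho(r)/c_1}=\frac{2c_4 L}{c_3 c_1}r\in\mathcal K$, which is exactly the ISS property.

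\emph{Main obstacle.} The delicate point is the regularity of $d(\cdot,x_{\ast})$: globally it is merely Lipschitz, being non-smooth on the cut locus and---just as $|x-x_{\ast}|$ in $\mathbb R^n$---at $x_{\ast}$ itself, so Item~3 of Theorem~\ref{converse Exp}, and with it the manipulation of $\text{d}V$ in Step~2, is not literally available without care. I expect this to be where the real work lies; the remedy is the $p=2$ device of Remark~\ref{rmk: item 3} (which cures the singularity at $x_{\ast}$), combined with either restricting to the region where $d(\cdot,x_{\ast})$ is smooth or replacing $\text{d}V$ by a Dini/Clarke generalized derivative and using the nonsmooth comparison lemma. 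Everything else---the split in Step~2 and the bookkeeping in Step~3---is routine.
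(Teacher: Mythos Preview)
Your proposal is correct and follows essentially the same route as the paper: apply the converse theorem to the unforced system, split $f(t,x,u)=f(t,x,0)+[f(t,x,u)-f(t,x,0)]$, bound the perturbation term by $|\text{d}V|\cdot L|u|_\infty$, and invoke the standard ISS argument. The only difference is that the paper works with the $p=1$ candidate and the constant gradient bound $|\text{d}V|\le c_4$ (thus tacitly assuming $d(\cdot,x_{\ast})$ is $\mathcal C^1$), obtaining the affine inequality $\mathcal L_f V\le -c_3V+c_4L|u|_\infty$ directly, whereas you take the $p=2$ variant of Remark~\ref{rmk: item 3} to sidestep the smoothness issue; your parenthetical about the $p=1$ case already reproduces the paper's computation verbatim.
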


\begin{proof}
By Theorem (\ref{converse Exp}), a Lyapunov function $V(t,x)$ verifying the
three conditions can be constructed for the unforced system $\dot{x}=f(t,x,0)$
when $\delta$ is large enough. Rewrite%
\[
f(t,x,u)=f_{1}+f_{2}%
\]
where%
\begin{align*}
f_{1}  &  =f(t,x,0)\\
f_{2}  &  =f(t,x,u)-f(t,x,0).
\end{align*}
By assumption, $\mathcal{L}_{f_{1}}V\leq-c_{3}V$. The Lie derivative of
$V(t,x)$ with respect to (\ref{NL sys input}) reads%
\begin{align*}
\mathcal{L}_{f}V  &  =\mathcal{L}_{f}V=\frac{\partial V}{\partial t}%
+L_{f}V=\frac{\partial V}{\partial t}+\text{d}V\left(  f_{1}+f_{2}\right) \\
&  =\mathcal{L}_{f_{1}}V+\text{d}V\left[  f(t,x,u)-f(t,x,0)\right] \\
&  \leq-c_{3}V+c_{4}|f(t,x,u)-f(t,x,0)|\\
&  \leq-c_{3}V+c_{4}L|u|_{\infty}.
\end{align*}
Now invoking standard arguments from ISS theory, we conclude that the system
(\ref{NL sys input}) is ISS.
\end{proof}

\section{Concluding remarks}

We have proved the converse theorems on Riemannian manifolds, in a
coordinate-free way. The constructed Lyapunov functions and the line of proofs
share a lot in common with that in Euclidean space. This may suggests that
there is no essential difference of Lyapunov stabilities between Riemannian
manifolds and Euclidean space in regardless of the global topology. Further
studies may include the application of the results and the extension to
Finsler manifolds.

\section{Acknowledgement}

We thank Dr. Antoine Chaillet for his instructions and fruitful discussions
during the preparation of this manuscript.

\bibliographystyle{elsarticle-num}
\bibliography{C:/swp55/TCITeX/BibTeX/bib/USEROWN/GeometricControl}

\end{document}